\newtheorem{theorem}{Theorem}[section]
\newtheorem{corollary}{Corollary}
\newtheorem{lemma}[theorem]{Lemma}
\newtheorem{proposition}{Proposition}
\theoremstyle{definition}
\newtheorem{remark}{Remark}
\newcommand{\Real}{\mathbb {R}}
\newcommand{\cA}{\mathcal{A}}
\newcommand{\cH}{\mathcal{H}}
\DeclareMathOperator*{\Rel}{Re}
\date{}
\title
{Stability of nonconstant stationary solutions in
a reaction-diffusion equation coupled to the system of ordinary differential equations }
\begin{document}

\maketitle
\vspace{-1 cm }

\centerline{\scshape Yuriy Golovaty}
\smallskip
{\footnotesize
 \centerline{Department of Mechanics and Mathematics}
   \centerline{Ivan Franko National University of L'viv}
\centerline{ Universytetska str. 1,
L'viv 79000, Ukraine}
\centerline{ yu\_holovaty@franko.lviv.ua}
} 

\medskip

\centerline{\scshape Anna Marciniak-Czochra}
\smallskip
{\footnotesize
 \centerline{University of Heidelberg,}
   \centerline{Interdisciplinary Center for Scientific Computing (IWR)}
   \centerline{Institute of Applied Mathematics and BIOQUANT}
      \centerline{Im Neuenheimer Feld 267, 69120 Heidelberg, Germany }
 \centerline{  anna.marciniak@iwr.uni-heidelberg.de }  } 

\medskip
\centerline{\scshape Mariya Ptashnyk }
\smallskip
{\footnotesize
 \centerline{Department of Mathematics} 
 \centerline{ University of Dundee}
   \centerline{ DD1 4HN Dundee, Scotland, UK}
\centerline{ m.ptashnyk@dundee.ac.uk}}

\bigskip


\begin{abstract}
In this paper we study pattern formation arising in a system of a single reaction-diffusion equation coupled with subsystem of ordinary
 differential equations, describing spatially-distributed growth of clonal populations of precancerous cells,
whose proliferation is controled by growth factors diffusing in the extracellular medium and binding to the cell surface. We extend the results on the existence of nonhomogenous stationary solutions obtained in \cite{Marciniak} to a general Hill-type production function and full parameter set. Using spectral analysis and perturbation theory we derive conditions for the linearized stability of such spatial patterns.
 \end{abstract}
 
 {\footnotesize
{\it 2000~MSC} {Primary:	35K57, 35J57; Secondary: 92B99.}

 {\it Keywords:} {Pattern formation, reaction-diffusion equations, linearized stability, spectral analysis}
 }
\section{Introduction}
Partial differential equations of diffusion type have long served to model regulatory feedbacks and pattern formation in aggregates of living cells. Classical mathematical models of
pattern formation in cell assemblies have been constructed using reaction-diffusion equations. They have been applied to describe pattern formation of animal coat markings, bacterial
and cellular growth patterns, tumor growth and tissue development, see e.g.,  \cite{Murray} and \cite{Maini} and references therein.  One of the mechanisms of pattern formation in reaction-diffusion systems,  prevalent  in the modeling literature since the seminal paper of Allan Turing \cite{Turing}, is diffusion driven instability  (Turing-type instability).

Diffusion-driven instability arises in a reaction-diffusion system, when there exists a spatially homogeneous solution, which is asymptotically
stable in the sense of linearized stability in the space of constant functions, but
it is unstable with respect to spatially inhomogeneous perturbation. The majority of theoretical studies in theory of pattern formation focus on the analysis of the systems of two or more reaction-diffusion equations. In many biological applications it is relevant to consider systems consisting of a single reaction-diffusion equation coupled with a system of ordinary differential equations.  Such models can also exhibit diffusion-driven instability. However, they are very different from classical Turing-type models and the spatial structure of the pattern emerging from the destabilisation of the spatially homogeneous steady state cannot be concluded from a linear stability
analysis \cite{our_CMMM,Murray}. The models exhibit qualitatively
new patterns of behavior of solutions, including, in some cases, a strong dependence of the emerging pattern on initial conditions and quasi-stability followed by rapid growth of solutions \cite{our_CMMM}.

Mathematical theory exists only for special cases of such solutions arising in the models, which can be simplified to the
single reaction-diffusion equation with nonlocal terms and with fast growing kinetics. The example of such systems are
Gierer-Meinhard and Gray-Scott models. The existence and stability of the solutions of such systems were intensively
studied using singular perturbation analysis and spectral analysis of the eigenvalue problem associated with the
linearization around the ``spike-like'' solution, e.g.~\cite{Doelman1,Doelman2,Wei}. The structure of the kinetics
involved in the cell proliferation model considered by us is different, in particular the autocatalysis is excluded by
the biological assumptions and the solutions of our model are uniformly bounded. As shown in \cite{our_M3AS, Marciniak} the models have
stationary solutions of periodic type, the maxima and minima of which may be of the spike or plateau type. Numerical
simulations show that in some cases  solutions of the model converge to a spatially heterogeneous pattern, which persists
for an arbitrary long time while in other cases the transient growth of nonconstant pattern is observed and ultimately
the solution converges to a stable spatially homogeneous state.

In this paper we approach the issue of the stability of nonconstant stationary solutions and investigate linear
stability of spatially heterogeneous solutions of the model of the reaction-diffusion equation coupled with two
ordinary differential equations, which was proposed in \cite{Marciniak}. The paper is organized as follow:
In Section \ref{ModelSection} the model is introduced and the results on existence, regularity  and boundedness of the model solutions are presented.
In Section \ref{StationarySection} existence of a spatially nonconstant steady state is shown. Section \ref{StabilitySection} is devoted
to the linearized stability analysis.

\section{Model description}\label{ModelSection}
We consider a model of a cell population controlled by a diffusive growth factor,
\begin{equation}\label{main}
\begin{cases}
\: \partial_t c = \dfrac {\theta b \, c}{b+c}  - d_c c+ \mu &\text{ in } (0,\infty)\times(0,1), \\
\: \partial_t b =\alpha c^2 g - d_b b - \nu b  & \text{ in } (0,\infty)\times(0,1), \\
\: \partial_t g = \dfrac 1 \gamma \,\partial_x^2 g - \alpha c^2 g - d_g g + \dfrac{\beta c^k}{1+c^k}+\nu b
& \text{ in } (0,\infty)\times(0,1),  \\
\: \partial_x g(t,0)=\partial_x g(t,1)=0   & \text{ in } (0,\infty),
\end{cases}
\end{equation}
with initial conditions
\begin{eqnarray*}\label{main_initCond}
c(0,x)= c_{in}(x), \quad  b(0,x)= b_{in}(x), \quad  g(0,x)= g_{in}(x) \quad \text{ for } x\in (0,1),
\end{eqnarray*}
where $c$ denotes the concentration of precancerous cells, whose proliferation rate  is reduced by cell crowding but enhanced in
a paracrine manner by a hypothetical biomolecular growth factor
$b$ bound to cells. Free growth factor $g$ is secreted by the cells,
then it diffuses among cells with diffusion constant $1/\gamma$, and binds to cell membrane receptors becoming the bound factor $b$.
Then, it dissociates at a rate $\nu$, returning to the $b$-pool. Parameter $\mu$ denotes a small influx of new precancerous cells due to mutation.
All coefficients in the system \eqref{main} are assumed to be constant and positive, and $k\in \mathbb N$. For  $k$ larger than 1, production of growth factor  molecules is given by a Hill function and models a process with fast the saturation effect.

\begin{theorem}
For nonnegative initial data $(c_{in},b_{in},g_{in}) \in C^{2+\alpha}([0,1])^3$  there exists a nonnegative global solution of
 system \eqref{main},
$(c,b,g)\in C^{1+\alpha/2, 2+\alpha}([0,\infty)\times[0,1])^3$.
\end{theorem}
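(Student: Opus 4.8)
The plan is to obtain a local-in-time solution by a contraction argument in parabolic H\"older spaces, to show that the nonnegative octant is invariant, to establish a priori bounds on arbitrary finite time intervals, and finally to bootstrap regularity. Since the $c$- and $b$-equations contain no spatial derivatives, I would set up the fixed point for $g$ alone: given a nonnegative $\hat g\in C^{\alpha/2,\alpha}([0,T]\times[0,1])$, solve the two ordinary differential equations pointwise in $x$ for $(c,b)$, then solve the linear parabolic Neumann problem for $g$ with zeroth-order coefficient $\alpha c^2+d_g$ and source $\tfrac{\beta c^k}{1+c^k}+\nu b$, and show that for $T$ small the induced map $\hat g\mapsto g$ contracts a small ball. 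The one genuinely singular nonlinearity, $f(b,c)=\tfrac{bc}{b+c}$, is not differentiable at the origin; I would replace it by the globally Lipschitz extension $\tfrac{b^+c^+}{b^++c^+}$ (with value $0$ at the origin) and, similarly, replace $c^2$ and $c^k$ by $(c^+)^2$ and $(c^+)^k$ in the remaining terms, solve the modified problem, and remove the truncations a posteriori once positivity is known.

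\textbf{Nonnegativity.} I would prove that the modified, hence also the original, system keeps $(c,b,g)$ nonnegative. The sign structure is favourable: at $c=0$ the $c$-equation gives $\partial_t c=\mu>0$; at $b=0$ with $c,g\ge0$ the $b$-equation gives $\partial_t b=\alpha c^2 g\ge0$; and where $g$ attains the value $0$ (at a spatial minimum, compatibly with the Neumann condition) the surviving source $\tfrac{\beta c^k}{1+c^k}+\nu b$ is nonnegative, so the parabolic minimum principle yields $g\ge0$. Since the three equations are quasi-positive, this is made rigorous either via invariant-region theorems for reaction-diffusion systems coupled with ODEs or directly through the positivity-preserving Picard iteration used in the local construction; comparison with $y'=-d_c y+\mu$, $y(0)=0$, in fact gives $c(t,x)>0$ for $t>0$, so the original kinetic term $\tfrac{bc}{b+c}$ is evaluated only where it is smooth and all truncations are inactive.

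\textbf{A priori bounds and global existence.} Fix $T>0$. From $\tfrac{\theta bc}{b+c}\le\theta\min(b,c)\le\theta c$ one gets $\partial_t c\le(\theta-d_c)c+\mu$, so $c\le M_T$ on $[0,T]$ for a constant $M_T$ depending only on the data and $T$. The naive sup-bounds for $b$ and $g$ are circular, since the bound on $g$ uses the $\nu b$ source while the bound on $b$ uses the $\alpha c^2 g$ term, so I would close the estimate by Gronwall's inequality: integrating the $b$-equation gives $\sup_x b(t)\le\|b_{in}\|_\infty+\alpha M_T^2\int_0^t\sup_x g$, while comparison/Duhamel for the $g$-equation (its zeroth-order coefficient is $\ge d_g>0$, its source is $\le\beta+\nu\sup_x b$) gives $\sup_x g(t)\le\|g_{in}\|_\infty+\beta t+\nu\int_0^t\sup_x b$; substituting one into the other and applying Gronwall bounds both $b$ and $g$ on $[0,T]$. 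These uniform bounds on $[0,T]$ exclude finite-time blow-up, so the local solution continues to all of $[0,\infty)$.

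\textbf{Regularity and the main obstacle.} Once $(c,b,g)$ is bounded, the coefficient and source in the $g$-equation are H\"older in $(t,x)$ as soon as $c$ and $b$ are, so linear parabolic Schauder estimates with Neumann data give $g\in C^{1+\alpha/2,2+\alpha}$. Feeding this back, $c$ and $b$ solve ODEs in $t$ with right-hand sides smooth in $(c,b,g)$; differentiating these ODEs in $x$ (using $g\in C^{2+\alpha}$ in $x$ together with $c_{in},b_{in}\in C^{2+\alpha}$) and integrating in $t$ propagates the same regularity to $c$ and $b$, and a short bootstrap gives $(c,b,g)\in C^{1+\alpha/2,2+\alpha}([0,\infty)\times[0,1])^3$. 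I expect the main obstacle to be exactly this coupling: the ODE components acquire no smoothing from diffusion, so full parabolic regularity must be carried through the fixed-point map and the bootstrap by hand, and the apparent circularity in the $L^\infty$-estimate for $(b,g)$ must be broken as above; the singular term $bc/(b+c)$ is only a minor technical point, handled by the positivity argument.
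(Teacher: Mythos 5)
Your proposal follows essentially the same route as the paper: local existence for the coupled parabolic--ODE system, nonnegativity via an invariant-region / quasi-positivity argument, the differential inequality $\partial_t c \le (\theta-d_c)c+\mu$ followed by Gronwall, and continuation to a global solution. The paper simply cites the general theory (Henry/Rothe for local existence, Chueh--Conley--Smoller for the invariant rectangle) and asserts that boundedness of $c$ implies boundedness of $b$ and $g$; your explicit Gronwall loop breaking the $b$--$g$ circularity and your truncation of $bc/(b+c)$ near the origin are correct refinements of steps the paper leaves implicit.
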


\begin{proof}
Using the existence and regularity theory for systems of parabolic and ordinary differential equations
(see \cite{Henry}, \cite{Rothe}),
for $(c_{in},b_{in},g_{in}) \in C^{2+\alpha}([0,1])^3$, we obtain, due to local Lipschitz continuity of reaction terms in the system,  the  existence of a local
 solution of \eqref{main},
$(c,b,g)\in C^{1+\alpha/2, 2+\alpha}([0,T_0]\times [0,1])^3$ for some $T_0 <\infty$.

The theory of
bounded invariant rectangles (see \cite{Chueh}, \cite{Smoller}) and the properties of functions
$F_c=\frac {\theta b \, c}{b+c}  - d_c c+ \mu$, $F_b=\alpha c^2 g - d_b b - \nu b $, $F_g=\frac{\beta c^k}{1+c^k}- \alpha c^2 g - d_g g  +\nu b$, i.e.
are smooth for $c\geq 0, b\geq 0, g\geq 0$ and in $\{(c,b,g)\in \mathbb R^3: c\neq - b \}$,
$F_c(\frac{\mu}{d_c},b,g) \geq 0$ for all $b\geq 0$, $g \geq 0$,
$F_b(c,0,g) \geq 0$ for all $c\geq \frac{\mu}{d_c}$, $g \geq 0$,
and
$F_g(c,b,0) \geq 0$ for all $c\geq \frac{\mu}{d_c}$, $b\geq 0$,
imply that the set $\Sigma=\{(c,b,g): c(t,x) \geq \min\{\min_{x\in [0,1]} c_{in}(x), \frac{\mu}{d_c}\}, b(t,x)\geq 0, g(t,x)\geq 0, x \in [0,1], t\geq 0 \}$
is positive invariant and  solutions are nonnegative for nonnegative initial data.
Then, from the first equation in \eqref{main} follows the estimate
$$
\partial_t c \leq  (\theta  - d_c) c+ \mu,
$$
and, by Gronwall inequality, we obtain, that
$c$ is bounded  for all finite time, i.e., $\sup\limits_{x\in[0,1]} c(t,x) \leq  \mu \exp{((\theta  - d_c)t)}$.
The boundedness of $c$ implies also the boundedness of $b$ and $g$, and  existence of a global solution.
\end{proof}

\section{Existence of positive non-constant steady states}\label{StationarySection}

Stationary solutions of \eqref{main} can be found from the system
\begin{equation}\label{eq_stead_state}
\begin{cases}
\;
  \dfrac {\theta  b\, c }{b+c}  - d_c c+ \mu=0,
 \\
\; \alpha c^2 g - d_b b - \nu b=0,\\
\; \dfrac 1 \gamma \, g'' - \alpha c^2 g - d_g g + \dfrac{\beta c^k}{1+c^k}+ \nu b=0, \quad
g'(0) = g'(1)=0.
\end{cases}
\end{equation}
We look for a positive solution $(c(x),b(x),g(x))$ defined in $(0,1)$ that has at least one non-constant component.
It is clear that for any solution of \eqref{eq_stead_state} either all functions $c$, $b$ and $g$ are constant or all depend on $x$ effectively. The next observation is that the function $c$ doesn't change the sign. In fact, if $c(x_0)=0$ for some $x_0\in (0,1)$, then it follows from the first equation that $\mu=0$, which is impossible.
Let us solve the system
 \begin{equation}\label{eq_stead_statePart}
\begin{cases}
\;
  \dfrac {\theta  b\, c }{b+c}  - d_c c+ \mu=0,
 \\
\; \alpha c^2 g - d_b b - \nu b=0
\end{cases}
\end{equation}
with respect to $c$ and $b$.

\begin{lemma}\label{LemmaC}
  If $\theta <d_c$, then there exists a unique  solution $(c(g),b(g))$ of \eqref{eq_stead_statePart},
which is continuous and positive for all $g\geq0$. In the case $\theta = d_c$ there is only one positive solution and
for $\theta>d_c$  there are two positive solutions  that exist only in some interval $(0,g^*)$, where $g^*$ is a finite number.
\end{lemma}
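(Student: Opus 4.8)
The plan is to eliminate $b$ and reduce \eqref{eq_stead_statePart} to a single scalar equation in $c$, with $g\ge 0$ treated as a parameter. The second equation gives at once $b=\kappa c^2 g$ with $\kappa:=\alpha/(d_b+\nu)>0$; since $c=0$ is impossible (it would force $\mu=0$) and we seek positive solutions, we have $b+c=c(\kappa c g+1)>0$, so substituting and multiplying by the positive factor $\kappa c g+1$ turns \eqref{eq_stead_statePart} into the equivalent algebraic equation
\begin{equation}\label{planQuad}
A(g)\,c^2+B(g)\,c+\mu=0,\qquad A(g):=\kappa(\theta-d_c)\,g,\quad B(g):=\mu\kappa g-d_c .
\end{equation}
Every positive root $c$ of \eqref{planQuad} corresponds to exactly one solution $\bigl(c,\kappa c^2 g\bigr)$ of \eqref{eq_stead_statePart} and conversely (the multiplication by $\kappa c g+1$ is reversible), so it remains to count the positive roots of \eqref{planQuad} and to follow them continuously in $g$.

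For $g=0$ equation \eqref{planQuad} degenerates to $-d_c c+\mu=0$, i.e.\ $c=\mu/d_c$, a single solution in all cases. For $g>0$ I would split according to the sign of $\theta-d_c$. If $\theta<d_c$ then $A(g)<0$: the downward parabola takes the positive value $\mu$ at $c=0$, hence by Vieta (product of the roots $=\mu/A(g)<0$) it has exactly one positive root; rationalising the quadratic formula, this root is
\[
c(g)=\frac{2\mu}{-B(g)+\sqrt{B(g)^2-4A(g)\mu}},
\]
which is manifestly positive and continuous for every $g\ge 0$ (the denominator is strictly positive since $B^2-4A\mu>B^2\ge 0$) and equals $\mu/d_c$ at $g=0$; then $b(g)=\kappa c(g)^2g$ is continuous and nonnegative. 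This settles the case $\theta<d_c$. If $\theta=d_c$ then $A(g)\equiv 0$ and \eqref{planQuad} is linear, with unique root $c(g)=\mu/(d_c-\mu\kappa g)$, positive precisely while $g<d_c/(\mu\kappa)$ — a single positive solution wherever one exists.

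The case $\theta>d_c$ is the one carrying the saddle–node picture and is the crux. Here $A(g)>0$, so a root of \eqref{planQuad} is positive only if simultaneously $B(g)<0$ (the vertex lies in $c>0$) and the discriminant $D(g):=B(g)^2-4A(g)\mu$ is nonnegative; when both hold, the two roots $c_\pm(g)=\bigl(-B(g)\pm\sqrt{D(g)}\bigr)/(2A(g))$ are both positive, since their product $\mu/A(g)$ and sum $-B(g)/A(g)$ are positive. Writing $u:=\mu\kappa g$ one computes $D(g)=(u-(2\theta-d_c))^2-4\theta(\theta-d_c)$, so $D(g)\ge 0$ iff $u\le(\sqrt{\theta}-\sqrt{\theta-d_c})^2$ or $u\ge(\sqrt{\theta}+\sqrt{\theta-d_c})^2$; a short computation gives $(\sqrt{\theta}-\sqrt{\theta-d_c})^2<d_c<(\sqrt{\theta}+\sqrt{\theta-d_c})^2$, so only the first interval is compatible with $B(g)<0$ (on the second one one checks $c_\pm<0$). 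Setting $g^*:=(\sqrt{\theta}-\sqrt{\theta-d_c})^2/(\mu\kappa)<\infty$, equation \eqref{planQuad} therefore has two distinct positive roots for $0<g<g^*$, a double positive root at $g=g^*$, and no positive root for $g>g^*$; together with the single root $\mu/d_c$ at $g=0$ (to which $c_-(g)$ tends while $c_+(g)\to+\infty$ as $g\to 0^+$, reflecting $A(g)\to 0$) this is exactly the asserted structure. The only genuinely delicate point is the sign bookkeeping: picking the branch of the quadratic formula that is positive, and verifying that the two constraints $B(g)<0$ and $D(g)\ge 0$ jointly cut out a single interval whose right endpoint $g^*$ is finite and is the one displayed above.
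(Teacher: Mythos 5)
Your proof is correct and takes essentially the same route as the paper: eliminating $b$ via $b=\frac{\alpha}{d_b+\nu}c^2g$ reduces the system to the same quadratic in $c$, and the case analysis on the sign of $\theta-d_c$ combined with the discriminant yields the stated branch structure with the same $g^\ast=\frac{d_b+\nu}{\alpha\mu}\bigl(\sqrt{\theta}-\sqrt{\theta-d_c}\,\bigr)^2$. Your handling of the case $\theta>d_c$ (Vieta to confirm both roots are positive, and the identity $u_-u_+=d_c^2$ to place $d_c$ strictly between the zeros of the discriminant) is in fact a bit more explicit than the paper's argument.
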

\begin{proof}
  Let us introduce the temporary notation $\ell(g)=\frac{\alpha \mu }{d_b +\nu}\, g$. First observe that
 \begin{equation}\label{B_as_functin_c}
    b(g)=\frac{1}{\mu}\,\ell(g)\, c^2(g),
 \end{equation}
 and hence $b$ is positive as soon as $g$ is positive.
  The main question is whether there in fact exists a positive $c$.
  Substituting expression  \eqref{B_as_functin_c} into the first equation of \eqref{eq_stead_statePart} yields
\begin{equation}\label{quad_eq_c}
   (\theta-d_c)\ell \, c^2 + \mu(\ell-d_c) \, c + \mu^2=0.
\end{equation}
This quadratic equation with respect to $c$ admits  real roots provided 
\begin{equation*}
    D=\mu^2\left((\ell-d_c )^2-4(\theta -d_c)\ell\right)=\mu^2\left((\ell+d_c)^2-4 \frac{\theta }{d_c}\,d_c\ell\right)\geq 0.
\end{equation*}
Note that $(p+q)^2-4\varepsilon pq\geq 0$ for all positive $p$, $q$ if and only if $\varepsilon\leq 1$. Hence,
under the assumption $\theta <d_c$, equation \eqref{quad_eq_c} has  the  solution
\begin{equation}\label{PositiveCg}
 c(g)=\mu \cdot \frac{\ell(g)-d_c+\sqrt{\left(\ell(g)-d_c\right)^2+4(d_c-\theta )\ell(g)}}{2(d_c-\theta )\ell(g)},
\end{equation}
which is positive for all $g>0$. Furthermore,
\begin{equation}\label{limitsOfC}
    c(0+)=\frac{\mu}{d_c},\qquad c(+\infty)=\frac{\mu}{d_c-\theta}.
\end{equation}
In the the critical case $\theta= d_c$ there exists a unique solution of \eqref{quad_eq_c}
\begin{equation}\label{LinearSolutionC}
 c(g)=\frac{\mu}{d_c-\ell(g) },
\end{equation}
where $c$ is positive for $g\in \left[0,\frac{d_c(d_b +\nu)}{\alpha\mu}\right)$. In the case $\theta>d_c$
there are two positive solutions of \eqref{quad_eq_c}, defined for $g\in (0,g^*)$ with
$g^*=\frac{d_b +\nu}{\alpha\mu}\bigl(\sqrt{\theta}-\sqrt{\theta-d_c}\,\bigr)^2$,
 $g^\ast < \frac{d_c(d_b +\nu)}{\alpha\mu}$. But only one of them  is bounded at $g=0$,
and this solution is given by \eqref{PositiveCg}. Moreover the solution is also bounded at the point $g^*$:
\begin{equation}\label{c(g*)}
    c(g^*)= \frac{\mu(d_c-\ell(g^*))}{2(\theta-d_c)\,\ell(g^*)}.
\end{equation}
Note that $g^*$ is the smallest positive point in which the  discriminant $D=D(g)$ vanishes.
Observe also that $\ell(g)<d_c$ for all $g\leq g^*$.
\end{proof}

We next come back to system \eqref{eq_stead_state}. Let $\omega=\frac{\alpha d_b}{d_b+\nu}$.
In view of Lemma~\ref{LemmaC}  the function $g$ must be a positive solution of the nonlinear boundary value problem
\begin{equation}\label{NBVP}
\begin{cases}
      \dfrac 1 \gamma \, g'' =  g \left(d_g + \omega c^2(g)\right) -  \dfrac{\beta c^k(g)}{1+c^k(g)}\quad \text{ in } (0,1),\\
       g'(0) =0,\quad g'(1)=0,
\end{cases}
\end{equation}
where $c$ is given by \eqref{PositiveCg}, if $\theta\neq d_c$, and by \eqref{LinearSolutionC} otherwise.
We must keep in mind that the function $c$  depends on parameters $\alpha$, $\mu$, $\nu$, $\theta$, $d_b$ and $d_c$.

\subsection{Nonlinear boundary value problem}
Let us consider the boundary value problem
\begin{equation}\label{BVPg}
    g''=\gamma h(g),\quad x\in (0,1),\qquad g'(0)=0, \quad g'(1)=0,
\end{equation}
where
$$
h(g)=  d_g g \left( 1 + \omega\, c^2(g)\right) -  \dfrac{\beta c^k(g)}{1+c^k(g)},\qquad \omega=\frac{\alpha d_b }{d_g(d_b+\nu)},
$$
and $c$ is given by \eqref{PositiveCg} or \eqref{LinearSolutionC}.
The equation of this type, so called \textit{system with one degree of freedom}, was intensively studied in classical mechanics. For a deeper analysis of the system we refer to \cite[Sec.12]{ArnoldODE}.

\begin{figure}[htbp]
\centerline{ \includegraphics[width=12cm]{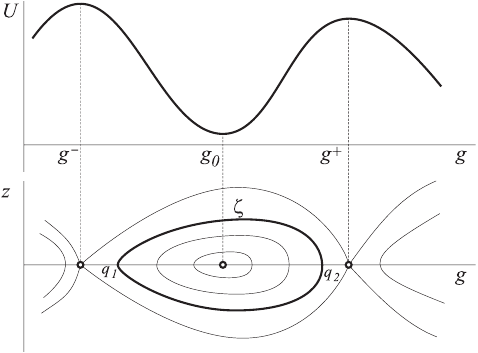}}
\caption{Plots of the energy $U$ and the trajectories of a dyna\-mi\-cal system.}
\label{PicPlotU}
\end{figure}

\begin{theorem}\label{exist_stat} If function $h=h(g)$ has no less than three positive roots, then there exists 
a set $\Gamma\subset\Real_+$ of diffusion constants $\gamma$ for which  boundary value problem~\eqref{BVPg} admits a positive solution.
\end{theorem}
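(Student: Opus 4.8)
The plan is to analyse \eqref{BVPg} by the classical phase-plane method for the one-degree-of-freedom system, exploiting that the Neumann conditions $g'(0)=g'(1)=0$ pick out trajectory arcs running between two consecutive turning points. Multiplying $g''=\gamma\,h(g)$ by $g'$ and integrating yields the first integral $\tfrac12\,(g')^2=\gamma\,(U(g)-e)$, where $U$ is a primitive of $h$ (the ``energy'' of Figure~\ref{PicPlotU}) and $e$ a constant of integration; a solution of the boundary value problem is then an orbit of energy level $e$ joining turning points $g_-(e)<g_+(e)$ (simple zeros of $U-e$, with $U>e$ strictly in between), travelled so that the transit time between them is an integer fraction of $1$, with $\gamma$ tuned accordingly.

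First I would pin down the shape of $U$. Since $h$ has at least three positive roots, $U$ has at least three positive critical points; assuming (as in the generic situation of Figure~\ref{PicPlotU}) that these are nondegenerate, they alternate between local minima and maxima, so $U$ has a strict interior local maximum at some $g_2>0$. Moreover $h(0^+)<0$ (immediate from $c(0^+)=\mu/d_c$ by \eqref{limitsOfC}) forces the picture min--max--min on three consecutive sign-change roots $g_1<g_2<g_3$ of $h$, with $\max\{U(g_1),U(g_3)\}<U(g_2)$. Hence for every $e$ in the nonempty open interval $I=(\max\{U(g_1),U(g_3)\},\,U(g_2))$ the connected component of $\{U\ge e\}$ containing $g_2$ is a compact interval $[g_-(e),g_+(e)]\subset(g_1,g_3)$ with $g_-(e)\ge g_1>0$, $U>e$ on $(g_-(e),g_+(e))$ and $h(g_\pm(e))\ne0$. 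On $[g_-(e),g_+(e)]$ the function $c(\cdot)$ from \eqref{PositiveCg} (or \eqref{LinearSolutionC}) and hence $h$ are smooth --- in the case $\theta>d_c$ note $g_+(e)<g_3<g^*$, so we stay in the domain of $c$ --- and $e$ is a regular value of $U$ there.

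Next I would evaluate the transit time. Separating variables in $\tfrac12(g')^2=\gamma(U(g)-e)$, the time to run monotonically from $g_-(e)$ to $g_+(e)$ is
\begin{equation*}
\tau_\gamma(e)=\frac{1}{\sqrt{2\gamma}}\int_{g_-(e)}^{g_+(e)}\frac{dg}{\sqrt{U(g)-e}}=\frac{\Phi(e)}{\sqrt{\gamma}},\qquad
\Phi(e):=\frac{1}{\sqrt2}\int_{g_-(e)}^{g_+(e)}\frac{dg}{\sqrt{U(g)-e}}\in(0,\infty),
\end{equation*}
the integral being finite because near the endpoints $U(g)-e\sim |h(g_\pm(e))|\,|g-g_\pm(e)|$, an integrable $|g-g_\pm|^{-1/2}$ singularity. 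Reflecting the arc at the turning points, the trajectory started at $g(0)=g_-(e)$, $g'(0)=0$ returns to a turning point with zero derivative precisely at times $\tau_\gamma(e),2\tau_\gamma(e),3\tau_\gamma(e),\dots$; therefore it solves \eqref{BVPg} on $[0,1]$ as soon as $n\,\tau_\gamma(e)=1$ for some $n\in\mathbb N$, i.e.\ $\gamma=n^2\Phi(e)^2$. Such a solution has range in $[g_-(e),g_+(e)]\subset(0,\infty)$, hence is positive, and it is nonconstant since $g'\not\equiv0$ (and together with $c(g),b(g)$ from Lemma~\ref{LemmaC} it produces a positive nonconstant steady state of \eqref{main}). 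One therefore takes
\begin{equation*}
\Gamma=\bigl\{\,n^2\,\Phi(e)^2\ :\ n\in\mathbb N,\ e\in I\,\bigr\}\subset\Real_+,
\end{equation*}
which is nonempty (e.g.\ $\Phi(e)^2\in\Gamma$ for any $e\in I$), and by the construction every $\gamma\in\Gamma$ yields a positive solution of \eqref{BVPg}; continuity --- and, with a little more work, strict monotonicity --- of $e\mapsto\Phi(e)$ on $I$ shows that $\Gamma$ actually contains a half-line $[\gamma_0,\infty)$ of diffusion constants.

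The step I expect to be the real obstacle is the structural one: converting ``$h$ has at least three positive roots'' into the concrete potential-well picture --- a strict local maximum of $U$ at a positive point flanked by strictly lower critical levels, and an honest regular energy value $e$ --- which in particular requires ruling out (or separately handling) degenerate roots of $h$; and, secondarily, checking throughout that the selected arc never leaves the interval on which $c(g)$, and hence $h$, is defined and smooth. Once the geometry of Figure~\ref{PicPlotU} has been secured, the travel-time computation and the selection of $\gamma$ are routine.
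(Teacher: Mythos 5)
Your proof is correct and follows essentially the same route as the paper: the first integral $\tfrac12(g')^2=\gamma\left(U(g)-e\right)$, closed orbits of the one-degree-of-freedom system around the interior critical point of the potential created by three sign-changing positive roots of $h$, and selection of $\gamma$ so that an integer number of half-periods (transit times between turning points) fits into $[0,1]$. The differences are cosmetic --- you use the opposite sign convention for $U$ and make the transit-time integral $\Phi(e)$ and the finiteness of its endpoint singularities explicit where the paper just quotes the period $T$; incidentally, your scaling $\gamma=n^2\Phi(e)^2$ is the correct one for $u''=\gamma h(u)$, whereas the paper's substitution $u(x)=g(x/\sqrt{\gamma})$ with $\gamma_n=4/(n^2T^2)$ actually solves $u''=\gamma^{-1}h(u)$, so your version fixes a reciprocal slip in the published argument.
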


\begin{proof} We assume for a while that $\gamma=1$. Suppose that the potential energy
$$
U(g)=-\int_0^g h(\xi)\,d\xi
$$
has a positive critical point $g_0$  that is a  local minimum of $U$.
The trajectories of dynamic system $g'=z$, $z'=h(g)$ resemble ellipses near the point
$(g_0,0)$ of the phase space. Let us consider the trajectory $\zeta$ that intersects the axis  $z=0$ at positive points $q_1$ and $q_2$ (see Fig.~\ref{PicPlotU}). 
Therefore there exists  a periodic solution  $g=g(x)$, $z=z(x)$ of
 the dynamic system subject to initial conditions $g(0)=q_1$, $z(0)=0$, and its period  is given by
\begin{equation*}
    T=2 \int_{q_1}^{q_2}\frac{d\eta}{\sqrt{2\left(U(q_1)-U(\eta)\right)}}.
\end{equation*}
 Remark that the energy $U$ without local minimum points is either monotonic or a function with a single maximum point. In both cases no trajectory starting at a point $(g_1,0)$ can return again to the line $z=0$, with the exception of the equilibrium positions.

Next,  $g$ is a positive solution of the initial problem $g''=h(g)$, $g(0)=q_1$, $g'(0)=0$. Moreover,
$g'(\frac{nT}2)=0$ for any natural $n$, because $g'(\frac{nT}2)=z(0)=0$ for even $n$ and $g'(\frac{nT}2)=z(\frac{T}2)=0$ for odd $n$. Given $n$, we consider the function $u(x)=g(\frac{x}{\sqrt{\gamma}})$ that is obviously the solution to equation $u''=\gamma h(u)$, and $u'(0)=0$. 
We set $\gamma_n=\frac{4}{n^2T^2}$, so that
$$
u'(1)=\frac{1}{\sqrt{\gamma_n}}\,g'\left(\frac{1}{\sqrt{\gamma_n}}\right)
=\frac{1}{\sqrt{\gamma_n}}\,g'\left(\frac{nT}{2}\right)=0.
$$
Hence, $u$ is a positive solution to \eqref{BVPg}.

It follows that any close trajectory $\zeta$ lying in the half-plane $g>0$ produces  a countable set of positive solutions
$u_n(\zeta,\cdot)$ to \eqref{BVPg} with $\gamma=\gamma_n(\zeta)$. All these sequences $\{\gamma_n(\zeta)\}_{n\in \mathbb{N}}$ form the set $\Gamma$, which is in general uncountable.

The first and primary question is whether the energy $U$ has a local minimum.
We consider first the case
$\theta <d_c$.
The energy has a local minimum at point $g=g_0$ if only $h(g_0)=0$, $h>0$ in a left-side neighborhood of $g_0$ and $h<0$ in a right-side one. Let
$$
 h_1(g)= d_g g \left( 1 + \omega\, c^2(g)\right),\quad h_2(g)=\dfrac{\beta c^k(g)}{1+c^k(g)}.
$$
We look for a root $g_0$ of the equation $h_1(g)=h_2(g)$ for which $h_1(g)>h_2(g)$ the left of $g_0$ and  $h_1(g)<h_2(g)$
on its right.
A trivial verification shows that  $c=c(g)$ (given by \eqref{PositiveCg}) is  a steadily increasing function and
\begin{equation}\label{limitsC}
c(+0)=\frac{\mu}{d_c}, \qquad c(g)\to \frac{\mu}{d_c-\theta}\quad \text{as  } g\to +\infty.
\end{equation}
Set $c_{\rm min}=\frac{\mu}{d_c}$ and $c_{\rm max}=\frac{\mu}{d_c-\theta}$.
It is convenient to consider the $c$-representation  of functions $h_j$ on interval $(c_{\rm min},c_{\rm max})$:
\begin{equation*}
    h_1(c)=\frac{a(c-c_{\rm min})(1+ \omega c^2)}{c(c_{\rm max}-c)},\qquad h_2(c)=\dfrac{\beta c^k}{1+c^k},
\end{equation*}
where $a=\frac{d_c d_g (d_b+\nu)}{\alpha(d_c-\theta)}$.  Both functions
are strictly increasing in the region under study. In regard to $h_1$, its derivative can be written as
\begin{equation*}
     h_1'(c)=a\cdot \frac{(1+ \omega c^2)(c_{\rm min}(c_{\rm max}-c)+c(c-c_{\rm min}))+2 \omega
c^2(c-c_{\rm min})(c_{\rm max}-c)}{c^2(c_{\rm max}-c)^2}
\end{equation*}
and is obviously positive for $c\in (c_{\rm min},c_{\rm max})$. The function $h_1$ has two vertical asymptotes
$c=0$ and $c=c_{\rm max}$, while $h_2$ is bounded.
 In addition, $h_2$ has an inflection point {$c=\sqrt[k]{\frac{k-1}{k+1}}$},
 whereas  the inflection point of $h_1$ depends on parameters of the model.
Fig.~\ref{PicPlotsH1H2} shows the typical plots of $h_1$ and $h_2$. Evidently, the energy $U$ has a local minimum if plots of the
functions intersect at three points.

Similarly,
for $\theta=d_c$ we obtain that $c(g)$  is monoton increasing and
\begin{equation*}
c(+0)= \frac \mu d_c , \qquad c(g) \to \infty \quad \text{ as }  g \to  \frac {d_c(d_b+ \nu)}{\alpha \mu}.
\end{equation*}
The function  $h_1$
is nearly linear   for small $g$ and growth quadratically as $g\to \frac {d_c(d_b+ \nu)}{\alpha \mu}$, 
whereas $h_2(g)\sim c g^k$ for  $g\in(0, \frac {d_c(d_b+ \nu)}{\alpha \mu}) $ such that $c(g)\leq \sqrt[k]{\frac{k-1}{k+1}}$,
 and is  bounded for
$g \to  \frac {d_c(d_b+ \nu)}{\alpha \mu}$;
 additionally,
$h_1(0)=0 < h_2(0)=\frac{\beta (\mu/d_c)^k}{1+(\mu/d_c)^k} $.
Thus, for $k\geq 2$ there exist sets of parameters $d_c, d_b, \nu, \alpha, \mu, \beta$,  such that the function $h(g)$ has three roots in $(0, \frac {d_c(d_b+ \nu)}{\alpha \mu})$.

\begin{figure}[htbp]
\centerline{\includegraphics[width=14.2 cm]{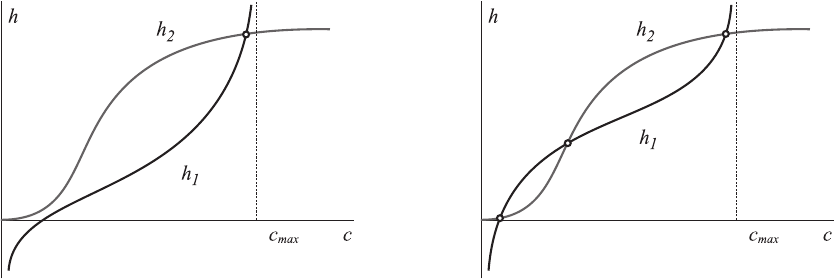}}
\caption{Sample plots of  $h_1$ and $h_2$ corresponding to  nonexistence and existence of a
 nonhomogeneous stationary solution in the left and right panels, respectively.}
\label{PicPlotsH1H2}
\end{figure}

For  $\theta>d_c$ and $c(g)$  defined by \eqref{PositiveCg}, we obtain
\begin{eqnarray*}
 c(+0)=\frac \mu {d_c}, \quad c(g^\ast)= \frac{\mu(d_c- l(g^\ast))}{2(\theta- d_c) l(g^\ast)},
\end{eqnarray*}
 and again $h_1$ is nearly linear,  $h_2(g) \sim c g^k$  for small $g$,   $h_1(0)=0 < h_2(0)$,  and both functions are bounded on $(0, g^\ast)$.
For  $k\geq 2$  there exist two roots, as  functions of parameters,  of $h(g)$ on the interval $(0, g^\ast)$.
The third root of $h(g)$ lies on the interval $(g^{\ast \ast}, \infty)$, where $g^{\ast \ast}$ is the largest positive point in which $D=D(g)$ vanishes.
\end{proof}

\begin{remark}
Let $d_c<\theta$, the solution $c(g)$ of  \eqref{quad_eq_c} is defined by
\begin{equation}\label{sol_c_unb}
 c(g)=\mu \cdot \frac{\ell(g)-d_c-\sqrt{\left(\ell(g)-d_c\right)^2+4(d_c-\theta )\ell(g)}}{2(d_c-\theta )\ell(g)},
\end{equation}
 parameters $d_c$, $d_b$, $\mu$,  $\nu$,  $\alpha$   satisfy the  assumption
 $ 0<g^\ast < (d_b +\nu)d_c/(\alpha \mu)$,  and  $\beta$ is choosen such that $h(g^\ast) <0$.
Then   the function $h$ is continuous  in $(0, g^\ast]$, and $h(g) \to +\infty$ as $g\to +0$. This properties of $h$ and
  the assumption
 $h(g^\ast)<0$ provide existence of a local minimum of the energy  $U$ at a point $g_0 \in (0, g^\ast)$.
Thus, there
  exists  a set
  $\Gamma\subset\Real_+$ of diffusion constants $\gamma$ for which the boundary value problem~\eqref{BVPg}, with $c(g)$
defined by \eqref{sol_c_unb},
admits a positive solution. This situation was considered in \cite{Marciniak} with $k=1$.
\end{remark}
If $U$ has a local minimum $g_0$, then there are a local maximum $g^-$ in the interval $(0,g_0)$ and a local maximum 
$g^+$ in the interval $(g_0, \infty)$  as shown in Fig.~\ref{PicPlotU}.
Set $\varepsilon_0=\min\{|g_0-g^-|, |g_0-g^+|\}$.
\begin{corollary}\label{small_amp}
  For any $\varepsilon\in (0,\varepsilon_0)$ there exists a countable set of solutions $g_n$ to boundary value problem \eqref{BVPg} with $\gamma=\gamma_n$ and  $\gamma_n\to 0$ as $n\to \infty$, such that
  \begin{equation*}
    |g_n(x)-g_0|\leq \varepsilon
  \end{equation*}
for all $x\in (0,1)$.  Moreover $g_n$ is rapidly  oscillating function as $\gamma_n$ goes to $0$.
\end{corollary}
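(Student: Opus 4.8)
The plan is to carry out the phase-plane construction of Theorem~\ref{exist_stat} once more, but to pick the closed trajectory so that it is confined to an arbitrarily thin strip around the equilibrium $(g_0,0)$. Fix $\varepsilon\in(0,\varepsilon_0)$. Since $h$, and hence $U'=-h$, is analytic near $g_0$ (the function $c(g)$ in \eqref{PositiveCg}, resp.\ \eqref{LinearSolutionC}, is analytic on its domain), $g_0$ is an isolated zero of $h$, so $U'<0$ just to the left of $g_0$ and $U'>0$ just to the right; thus $g_0$ is a \emph{strict} local minimum, and for every $E$ slightly above $U(g_0)$ the connected component of the sublevel set $\{g:U(g)<E\}$ containing $g_0$ is a bounded interval $\bigl(q_1(E),q_2(E)\bigr)$ with $U(q_i(E))=E$ and $U<E$ inside; moreover $q_1(E),q_2(E)\to g_0$ as $E\downarrow U(g_0)$. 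Consequently there is $E=E(\varepsilon)\in\bigl(U(g_0),\min\{U(g^-),U(g^+)\}\bigr)$ with $[q_1,q_2]\subset[g_0-\varepsilon,g_0+\varepsilon]$, where $q_i:=q_i(E)$; note that $q_1>g^->0$, so the orbit lies in the half-plane $g>0$.

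With this $E$ fixed, the level curve $\tfrac12 z^2+U(g)=E$ is a closed orbit $\zeta$ of the system $g'=z$, $z'=h(g)$ encircling $(g_0,0)$, and, exactly as in the proof of Theorem~\ref{exist_stat}, it produces a periodic solution $g=g(x)$ of $g''=h(g)$ with $g(0)=q_1$, $g'(0)=0$ and period
\begin{equation*}
 T=2\int_{q_1}^{q_2}\frac{d\eta}{\sqrt{2\,(E-U(\eta))}}.
\end{equation*}
Since $g'\bigl(\tfrac{nT}{2}\bigr)=0$ for every $n\in\mathbb N$, setting $\gamma_n=\dfrac{4}{n^2T^2}$ and $g_n(x)=g\bigl(x/\sqrt{\gamma_n}\bigr)$ yields a positive solution of \eqref{BVPg} with $\gamma=\gamma_n$. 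As $g$ takes values in $[q_1,q_2]$, we obtain $|g_n(x)-g_0|\le\varepsilon$ for all $x\in(0,1)$, and plainly $\gamma_n\to0$ as $n\to\infty$. Finally, on $[0,1]$ the graph of $g_n$ is made up of $n$ consecutive monotone half-period arcs, so $g_n$ has order $n=2\bigl(T\sqrt{\gamma_n}\bigr)^{-1}\to\infty$ extrema, which is the asserted rapid oscillation; this exhibits the desired countable family.

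I expect the single delicate point to be the finiteness of the period $T$, that is, the convergence of the integral above at the turning points $q_1$ and $q_2$. This requires $U'(q_1)\neq0\neq U'(q_2)$: near $\eta=q_2$ one has $E-U(\eta)=|U'(q_2)|\,(q_2-\eta)+o(q_2-\eta)$, so the integrand is $O\bigl((q_2-\eta)^{-1/2}\bigr)$ and integrable (and similarly at $q_1$), whereas if a turning point coincided with a critical point of $U$ — a ``shoulder'' of $U$, say — the integral would diverge. To rule this out I would choose $E(\varepsilon)$ to be a regular value of $U$, which is possible arbitrarily close to $U(g_0)$ since the critical values of the analytic function $U$ form a nowhere dense set; at a regular value the endpoints of the component automatically satisfy $U'(q_1)<0<U'(q_2)$. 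Everything else is the rescaling bookkeeping already performed in Theorem~\ref{exist_stat}.
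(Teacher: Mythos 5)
Your proposal is correct and is essentially the argument the paper intends: the Corollary is left without a separate proof precisely because it follows from the phase-plane construction in Theorem~\ref{exist_stat} applied to closed orbits shrinking onto the equilibrium $(g_0,0)$, with $\gamma_n=4/(n^2T^2)\to 0$. Your extra care about the finiteness of the period $T$ is sound but can be obtained even more directly: since $U$ has a local minimum at $g_0$ with $h>0$ on a punctured left neighborhood and $h<0$ on a punctured right one, the turning points $q_1,q_2$ are automatically non-critical once $E$ is close enough to $U(g_0)$, so no appeal to regular values of an analytic function is needed.
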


\begin{figure}[htbp]
\centerline{\includegraphics[width=7.1cm]{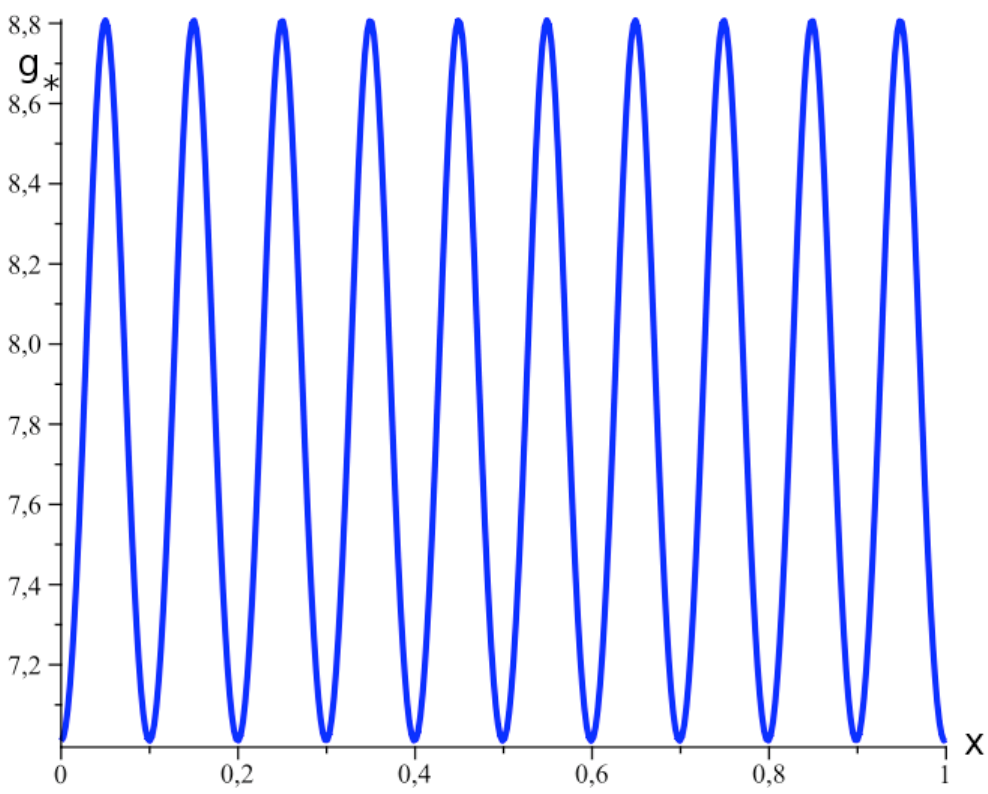} \quad
\includegraphics[width=6.1cm]{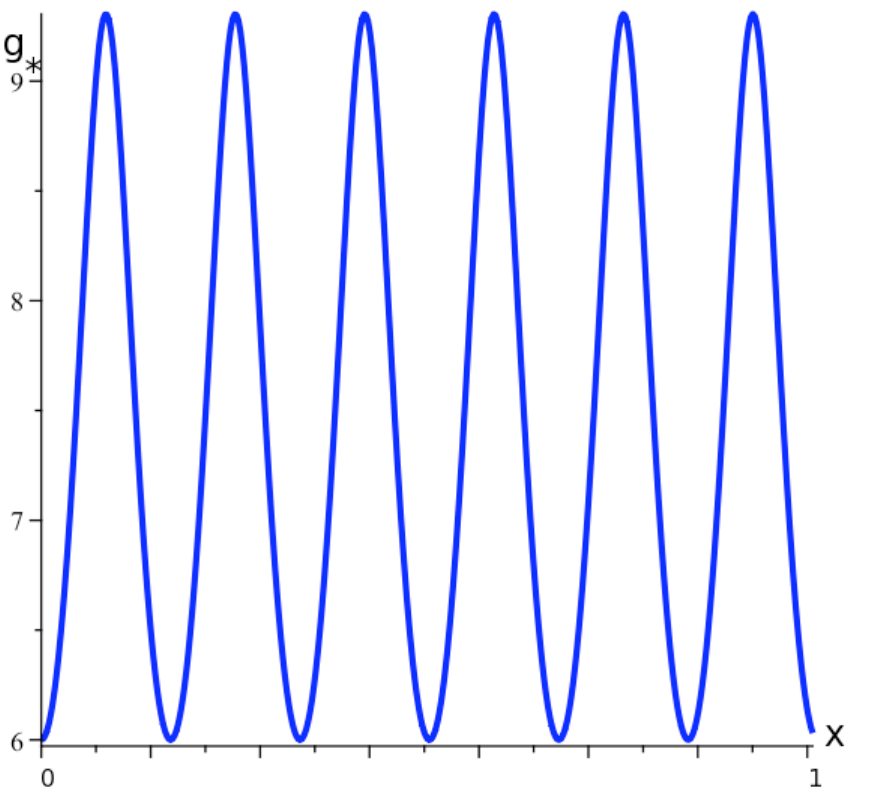}}
\caption{Spatially nonhomogeneous solutions of the stationary problem for the different values of $\gamma$.  
Parameters:  $\alpha=0.1$, $\beta=10$, $\nu=0.1$, $\mu=0.1$, $d_b=d_g=0.1$, $d_c=0.6$,  $\theta=0.59$,
 and $\gamma=12188.16$ and   $\gamma=4900$ in the left and right panels, respectively.}
\label{g_stat}
\end{figure}

\section{Stability and destabilisation of steady states}\label{StabilitySection}

Let $X$ be a complex Banach space with  norm $\|\cdot\|$ and let $A$ be a closed operator on $X$
with a dense domain $\mathcal{D}(A)$. Suppose that $-A$ is a sectorial operator and
$\Rel \lambda<0$ for all $\lambda\in \sigma(A)$.
Hereafter, $\sigma(T)$ stands for the spectrum of an operator $T$.
 For each  $s \geq 0$ we introduce the interpolation space $X^s=\mathcal{D}((-A)^s)$ equipped with the  norm $\|x\|_{s}=\|(-A)^s x\|$. Clearly
$X^0 =X$.
We consider the equation
\begin{equation}\label{help2}
 \partial_t  u=Au+f(u)
\end{equation}
in the Banach space $X$. Let $u^\ast \in \mathcal{D}(A)$ be a stationary solution of (\ref{help2}) .

To study  stability of the  stationary solutions we apply the following proposition
about stability and instability by the linear approximation that is a version
of Theorems~5.1.1 and 5.1.3  in \cite{Henry} adapted for our purposes.
\begin{proposition}\label{Stabil2}
 Let $f:\mathcal{U}\to X$ be a locally Lipschitz continuous map in a neighborhood
 $\mathcal{U}\subset X^s$ of a steady state $u^\ast$, for some $s\in (0,1)$.
Suppose that for $z\in X^s$ the map $f$ admits the representation
\begin{equation}\label{Flinearization}
    f(u^\ast+z)=f(u^\ast)+Bz+p(u^\ast,z),
\end{equation}
provided $\|z\|_s$ is small enough. If
\begin{itemize}
  \item[$\circ$] $B$ is a linear bounded operator from $X^s$ to $X$,
  \item[$\circ$] $||p(u^\ast,z)||=o(||z||_s)$ as  $||z||_s \to 0$,
  \item[$\circ$] the spectrum of $A+B$ lies in the set $\{\lambda\in \mathbb{C}\colon \Rel \lambda < h\}$ for some $h<0$,
\end{itemize}
then the equilibrium  solution $u^\ast$ of (\ref{help2}) is asymptotically stable in $X^s$.
Moreover, if the spectrum  $\sigma(A+B)$ and the half-plane $\{\lambda\in \mathbb{C}\colon \Rel \lambda >0\}$
have a non-empty intersection, then $u^\ast$ is unstable.
\end{proposition}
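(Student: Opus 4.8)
The plan is to follow the classical route for proving (in)stability by linearization for semilinear parabolic equations, as in \cite{Henry}, reducing everything to the variation-of-constants formula and to mapping properties of the analytic semigroup generated by $A+B$. First I would use that $u^\ast\in\mathcal D(A)$ is an equilibrium of \eqref{help2}, so $Au^\ast+f(u^\ast)=0$; setting $u=u^\ast+z$ and invoking \eqref{Flinearization}, equation \eqref{help2} becomes
\begin{equation*}
\partial_t z=(A+B)z+p(u^\ast,z),
\end{equation*}
and the claim turns into a statement about (asymptotic) stability of $z=0$ in $X^s$. Next I would verify that $-(A+B)$ is again sectorial: since $s<1$, interpolation gives for every $\eta>0$ a constant $C_\eta$ with $\|x\|_s\le\eta\|Ax\|+C_\eta\|x\|$ on $\mathcal D(A)$, so $B\in\mathcal L(X^s,X)$ is an $A$-bounded perturbation with relative bound $0$ and the perturbation theorem for sectorial operators applies; thus $A+B$ generates an analytic semigroup $e^{(A+B)t}$ whose fractional-power spaces coincide, with equivalent norms, with the given spaces $X^s$, so the semigroup smoothing estimates used below are valid in the $\|\cdot\|_s$ norm.

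For the stability statement I would fix $\eta\in(h,0)$ and, using that $\sigma(A+B)\subset\{\Rel\lambda<h\}$, invoke the standard estimates
\begin{equation*}
\|e^{(A+B)t}\|_{\mathcal L(X)}\le Me^{\eta t},\qquad \|e^{(A+B)t}\|_{\mathcal L(X,X^s)}\le M\,t^{-s}e^{\eta t},\qquad t>0 .
\end{equation*}
Given $\varepsilon>0$, the hypothesis $\|p(u^\ast,z)\|=o(\|z\|_s)$ yields $\rho>0$ with $\|p(u^\ast,z)\|\le\varepsilon\|z\|_s$ whenever $\|z\|_s\le\rho$. For $\|z(0)\|_s$ small enough I would pass to the mild formulation
\begin{equation*}
z(t)=e^{(A+B)t}z(0)+\int_0^t e^{(A+B)(t-\tau)}p\bigl(u^\ast,z(\tau)\bigr)\,d\tau ,
\end{equation*}
and, as long as $\|z(\tau)\|_s\le\rho$, estimate in the $\|\cdot\|_s$-norm to obtain
\begin{equation*}
e^{-\eta t}\|z(t)\|_s\le M\|z(0)\|_s+M\varepsilon\int_0^t(t-\tau)^{-s}\bigl(e^{-\eta\tau}\|z(\tau)\|_s\bigr)\,d\tau .
\end{equation*}
A singular Gronwall lemma then gives $\|z(t)\|_s\le C\|z(0)\|_s\,e^{\eta' t}$ for some $\eta'\in(h,0)$, once $\varepsilon$ (hence $\|z(0)\|_s$) is small enough that the threshold $\rho$ is never attained; equivalently, a contraction-mapping argument in the weighted space $\{\,z:\sup_{t\ge0}e^{-\eta' t}\|z(t)\|_s<\infty\,\}$ yields the same conclusion. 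This proves asymptotic stability of $u^\ast$ in $X^s$.

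For the instability statement, suppose now $\sigma(A+B)\cap\{\Rel\lambda>0\}\neq\emptyset$. Since $A+B$ is sectorial its spectrum lies in a sector opening to the left, so for a suitable $\beta>0$ the part $\sigma_+=\sigma(A+B)\cap\{\Rel\lambda\ge\beta\}$ is bounded and separated from the remainder of the spectrum; it is therefore a spectral set, and the associated Riesz projection $P$ splits $X=X_1\oplus X_2$ with $X_1=PX$ finite-dimensional and $(A+B)$-invariant, $e^{(A+B)t}$ restricting to a group on $X_1$ with $\|e^{(A+B)t}x\|\ge c\,e^{\beta t}\|x\|$ for $t\ge0$, and $\|e^{(A+B)t}\|_{\mathcal L(X_2,X^s)}\le M(1+t^{-s})e^{\beta_2 t}$ with $\beta_2<\beta$. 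With this exponential dichotomy I would argue by contradiction, as in Theorem~5.1.3 of \cite{Henry}: if $u^\ast$ were stable, then for arbitrarily small $\|z(0)\|_s$ the solution would remain in the region $\|p(u^\ast,z)\|\le\varepsilon\|z\|_s$; projecting the mild formula onto $X_1$ and $X_2$ and combining the dichotomy with a Gronwall bound for the $X_2$-component, one finds that $\|Pz(t)\|$ must grow essentially like $e^{(\beta-C\varepsilon)t}$ until $\|z(t)\|_s$ exceeds a fixed small number, which contradicts stability. Hence $u^\ast$ is unstable.

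I expect the instability half to be the main obstacle: one has to set up the hyperbolic splitting for the sectorial operator $A+B$, derive two-sided dichotomy estimates in the $X^s$-norm, and then control the nonlinear remainder $p$ — for which only the qualitative bound $\|p(u^\ast,z)\|=o(\|z\|_s)$ is available — well enough that the unstable mode dominates the dynamics. A further point that must be checked with care, and is used throughout, is that the fractional-power spaces of $A+B$ genuinely coincide with the prescribed spaces $X^s$, so that every semigroup estimate is valid in the norm in which (in)stability is asserted.
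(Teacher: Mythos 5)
Your proposal is correct, and it coincides with the route the paper takes: the paper does not prove Proposition \ref{Stabil2} at all, but states it as an adaptation of Theorems 5.1.1 and 5.1.3 of \cite{Henry}, and your sketch (reduction to $\partial_t z=(A+B)z+p$, sectoriality of $A+B$ as a relatively bounded perturbation with the identification of its fractional-power spaces with $X^s$, variation of constants plus singular Gronwall for stability, and the spectral projection/dichotomy argument for instability) is exactly the standard proof of those two theorems. The two technical points you flag at the end — the coincidence of the fractional-power scales of $A$ and $A+B$, and the hyperbolic splitting in the instability half — are indeed the only places requiring care, and both are covered by Henry's Theorems 1.4.8 and 5.1.3 respectively.
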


\subsection{Linearised analysis}
For simplicity of notation, we denote the vector $(c,b,g)$ by $u=(u_1,u_2,u_3)$ so that
system \eqref{main}  reads
\begin{equation}\label{mainVect}
   \partial_t u=Au+f(u)
\end{equation}
in the Banach space $X=C[0,1]\oplus C[0,1]\oplus L^2(0,1)$, where
\begin{equation*}
    A=\begin{pmatrix}
      -d_c& 0 & 0\\
      0& -(d_b+\nu) & 0\\
      0 & 0 &  N_0
    \end{pmatrix},\qquad
   f(u)=\begin{pmatrix}
      \dfrac{\theta u_1 u_2}{u_1+u_2}+\mu&\\
       \alpha u_1^2 u_3& \\
      \dfrac{\beta u_1^k}{1+u_1^k}+\nu u_2-\alpha u_1^2 u_3
    \end{pmatrix}
\end{equation*}
and $N_0$ is  the Sturm-Liouville operator given by $$N_0 v=\frac 1 \gamma \frac{d^2 v}{dx^2}-d_g v$$
on the interval $[0, 1]$, subject to  the Neumann boundary conditions,
$$\mathcal{D}(N_0)=\{v\in H^2(0,1)\colon {v}'(0)= 0, {v}'(1)=0\}.$$
We have that $A$ is a closed densely defined operator on $X$ and  the spectrum of $A$ is given by 
\begin{equation*}
    \sigma(A)=\{-d_c,-d_b-\nu\}\cup\{-d_g-\gamma^{-1}\pi^2j^2\}_{j=0}^\infty.
\end{equation*}
Also  for  $\lambda \notin \sigma(A)$ we have  the estimate 
$$
\|(A- \lambda E)\|_X \leq \frac 1 { \text{dist}(\lambda, \sigma(A))}.
$$
Hence $A$ is a sectorial operator, see e.g.\ \cite{Henry}, and $\Rel \lambda<0$ for all $\lambda\in \sigma(A)$.

For a sectorial operator we can consider interpolation spaces
$X^s=\mathcal{D}((-A)^s)$, for $s\in (0,1)$,  each of which is a Banach subspace of $C[0,1]\oplus C[0,1]\oplus H^{2s}(0,1)$, see e.g.\  \cite{Henry}.
Set $||p||_s=||p_1||_{C[0,1]}+||p_2||_{C[0,1]}+||p_3||_{H^{2s}(0,1)}$.

The function $f$ is smooth in  $\Real_+^3=\{y\in \Real^3\colon y_k>0, \:k=1,2,3\}$. Therefore, $f$ admits the representation
\begin{equation*}
    f(y+z)=f(y)+B(y)z+p(y,z),
\end{equation*}
where the remainder satisfies the estimate
\begin{equation}\label{FTaylor}
 \|p(y,z)\|_{\Real^3}\leq \vartheta(y) \|z\|_{\Real^3}^2
\end{equation}
 in a neighborhood of any point $y\in \Real_+^3$ with a continuous function $\vartheta$. Here
\begin{equation*}
    B(y)=\begin{pmatrix}
     \dfrac {\theta y_2^2}{(y_1+y_2)^2}  &  \dfrac {\theta y_1^2}{(y_1+y_2)^2}    & 0\\
   2\alpha y_1 y_3   & 0 &\alpha y_1^2 \\
    \dfrac{k\beta y_1^{k-1}}{(1+y_1^k)^2}  - 2 \alpha y_1 y_3  & \nu &  -\alpha y_1^2
    \end{pmatrix}.
\end{equation*}

Suppose now that $u^*$ is a positive steady state, that is $Au^*+f(u^*)=0$ and $u^*(x)\in \Real_+^3$ for all $x\in(0,1)$.
Assume also  that $u^*_j$ are $C^2$-functions, $j=1,2,3$.
Obviously, $B(u^*)$ is a linear bounded operator from $X^s$ to $X$, for each $s\in (0,1)$.

Next, using  \eqref{FTaylor} we obtain
$f(u^*+z)=f(u^*)+B(u^*)z+p(u^*,z)$
with the estimate of the remainder
\begin{equation*}
\|p(u^*,z)\|_{\Real^3}\leq \vartheta(u^*(x)) \,\|z(t,x)\|_{\Real^3}^2 \leq  \max_{x\in [0,1]}\vartheta(u^*(x))\|z(t,x)\|_{\Real^3}^2\leq \vartheta_{max} \|z\|_{\Real^3}^2.
\end{equation*}

Consequently, for any $z\in X^s$ using the fact that $\|v\|_{C[0,1]} \leq C \|v\|_{H^1(0,1)}$ we conclude
\begin{multline*}
    \|p(u^*,z)\|_X\leq c_1 \bigl(\|z_1\|^2_{C[0,1]}+||z_2||^2_{C[0,1]}+||z_3||^2_{H^1(0,1)}\bigr)\\
    \leq  c_1 \bigl(\|z_1\|^2_{C[0,1]}+||z_2||^2_{C[0,1]}+||z_3||^2_{H^{2s}(0,1)}\bigr)\leq c_2 \|z\|^2_s=o(\|z\|_s),
\end{multline*}
as $\|z\|_s\to 0$, with constants $c_j$, where $j=1,2$, being independent on $t$, and $s\in [1/2, 1)$.

For nonlinear model~\eqref{main} the linearized system  at the steady state $u^*$ can be written as
$\partial_t z= (A+B(u^*))z$. Then turning back to the previous notation, we obtain
\begin{equation}\label{main_lin}
\begin{cases}
\; \partial_t z_1 =\biggl(\dfrac {\theta b_\ast^2}{(c_\ast + b_\ast)^2}- d_c \biggr)\, z_1  +
 \dfrac {\theta  c_\ast^2}{(c_\ast+ b_\ast)^2}\, z_2, \\
\;\partial_t z_2 =
2\alpha c_\ast g_\ast \, z_1  - (d_b+\nu)\,  z_2 +
 \alpha c_\ast^2 \, z_3, \\
\;\partial_t z_3 = \biggl(\dfrac{k\beta c_\ast^{k-1}}{(1+c_\ast^k)^2} - 2 \alpha c_\ast g_\ast\biggr)\, z_1 + \nu\, z_2
  + \biggl(\dfrac 1\gamma \partial_x^2-\alpha c_\ast^2  - d_g \biggr)\,z_3.
\end{cases}
\end{equation}
The first and primary question is whether the spectrum of $A+B(u^*)$ lies  in the half-plane
$\{\lambda\in \mathbb{C}\colon \Rel \lambda < h\}$ for some $h<0$.

\subsection{Spectral analysis of  linearised problem}
Let us consider  the Sturm-Liouville operator  $N =\dfrac 1 \gamma \dfrac{d^2 }{dx^2}-\alpha c_\ast^2(x)-d_g$
in the domain $\mathcal{D}(N)=\{v\in H^2(0,1)\colon v'(0)=0,\, v'(1)=0\}$. For abbreviation,   we write
$\mathcal{A}$ instead of $A+B(u^*)$, and consider the operator
\begin{equation*}
    \mathcal{A}=\begin{pmatrix}
      l(x) & m(x) & 0\\
      n(x) & -\varkappa & q(x)\\
      r(x) & \nu &  N
    \end{pmatrix}
\end{equation*}
with
\begin{gather*}
\varkappa=d_b+\nu,\quad n= 2\alpha c_\ast g_\ast,  \quad q= \alpha c_\ast^2,\\
    l= \frac {\theta b_\ast^2}{(b_\ast+ c_\ast)^2} - d_c,\quad m= \frac {\theta  c_\ast^2}{(b_\ast+ c_\ast)^2}, \quad
    r= \frac{k\beta c_\ast^{k-1}}{(1+c_\ast^k)^2} - 2\alpha c_\ast g_\ast, 
\end{gather*}
where $k\geq 2$. Of course, constant $\varkappa$ and functions $m$, $n$, $q$ are positive in $[0,1]$, since the stationary solution $(c_\ast, b_\ast, g_\ast)$ is positive.

The matrix $\cA$ is regarded as an unbounded operator in the space $X$ with the domain $\mathcal{D}(\cA)=C[0,1]\oplus C[0,1]\oplus\mathcal{D}(N)$. It can be written in the form $\cA=\mathcal{A}_0+\mathcal{A}_1$, where
\begin{equation*}
    \mathcal{A}_0=\begin{pmatrix}
      l  & m  & 0\\
      0  & -\varkappa & q\\
      0 & 0 &  N
    \end{pmatrix} \quad \text{and} \quad
    \mathcal{A}_1=\begin{pmatrix}
      0 &0 & 0\\
      n & 0 & 0 \\
      r  & \nu &  0
    \end{pmatrix}.
\end{equation*}
Notice that $\cA_1$ is a bounded operator in $X$.

\begin{theorem}\label{the_spectr_A0}
Let $h^*=\max\left\{-\varkappa, \lambda_1, \max_{x\in [0,1]}l(x)\right\}$, where $\lambda_1$ is the largest eigenvalue of $N$.
The spectrum of $\cA_0$ lies in the set $\{\lambda\in \mathbb{C}\colon \Rel \lambda \leq h^*\}$ and $h^*<0$.
\end{theorem}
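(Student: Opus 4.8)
The plan is to derive the spectral inclusion directly from the preceding Proposition, and then to verify that each of the three numbers entering $h^*$ is strictly negative. For the inclusion, recall that the Proposition describes $\sigma(\cA_0)$ as the real set $\{-\varkappa\}\cup l([0,1])\cup\sigma(N)$. Since $N$ is a regular Sturm--Liouville operator with Neumann conditions on a bounded interval, it has compact resolvent, so $\sigma(N)$ consists of eigenvalues only, bounded above by the largest one, $\lambda_1$; since $l$ is continuous on $[0,1]$, the set $l([0,1])$ is a compact interval with right endpoint $\max_{x\in[0,1]}l(x)$; and $-\varkappa$ is a single real point. Hence every point of $\sigma(\cA_0)$ is real and does not exceed $h^*=\max\{-\varkappa,\lambda_1,\max_{x}l(x)\}$, which is exactly the assertion $\sigma(\cA_0)\subset\{\Rel\lambda\le h^*\}$.

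It then remains to show $h^*<0$. That $-\varkappa=-(d_b+\nu)<0$ is immediate since $d_b,\nu>0$. For $\lambda_1$ I would invoke the variational characterisation of the largest eigenvalue of the self-adjoint, bounded-above operator $N$ and integrate by parts using the Neumann conditions: for every $v\in\mathcal{D}(N)$,
\[
(Nv,v)_{L_2}=-\tfrac1\gamma\|v'\|_{L_2}^2-\int_0^1\bigl(\alpha c_\ast^2(x)+d_g\bigr)|v(x)|^2\,dx\le-d_g\|v\|_{L_2}^2,
\]
because $\gamma>0$ and $\alpha c_\ast^2+d_g\ge d_g>0$ on $[0,1]$; hence $\lambda_1=\sup_{v\ne0}(Nv,v)_{L_2}/\|v\|_{L_2}^2\le-d_g<0$.

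The step I expect to be the main obstacle is $\max_{x}l(x)<0$, since the naive bound $\theta b_\ast^2/(b_\ast+c_\ast)^2<\theta$ only yields $l<\theta-d_c$, which is not negative when $\theta\ge d_c$. The resolution is to use that $(c_\ast,b_\ast)$ solves \eqref{eq_stead_statePart}: its first equation gives $\theta b_\ast c_\ast=(d_c c_\ast-\mu)(b_\ast+c_\ast)$, and dividing by $c_\ast>0$ yields $\theta b_\ast/(b_\ast+c_\ast)=d_c-\mu/c_\ast$. Substituting this into $l=\tfrac{b_\ast}{b_\ast+c_\ast}\cdot\tfrac{\theta b_\ast}{b_\ast+c_\ast}-d_c$ and simplifying, I expect the identity
\[
l(x)=-\,\frac{d_c\,c_\ast(x)+\mu\,b_\ast(x)/c_\ast(x)}{b_\ast(x)+c_\ast(x)},
\]
whose right-hand side is strictly negative on $[0,1]$ because $c_\ast,b_\ast>0$ there. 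Since $l$ is continuous on the compact interval $[0,1]$ (recall $c_\ast,b_\ast\in C^2([0,1])$ and are strictly positive), the maximum is attained and is therefore negative. Combining the three estimates gives $h^*<0$, which completes the proof.
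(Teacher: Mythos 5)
Your proposal is correct and follows essentially the same route as the paper: it uses the description of $\sigma(\cA_0)$ as $\{-\varkappa\}\cup\sigma(N)\cup l([0,1])$, the positivity of $\alpha c_\ast^2+d_g$ for $\lambda_1<0$, and — the key step — the first stationary equation $\theta b_\ast/(b_\ast+c_\ast)=d_c-\mu/c_\ast$ to conclude $l<0$. The only cosmetic difference is that you turn this into the exact identity $l=-\bigl(d_c c_\ast+\mu b_\ast/c_\ast\bigr)/(b_\ast+c_\ast)$, whereas the paper uses the one-line inequality $\theta b_\ast^2/(b_\ast+c_\ast)^2<\theta b_\ast/(b_\ast+c_\ast)$ to get $l<-\mu/c_\ast<0$; both are valid.
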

\begin{proof}
To analyse the spectrum of $\cA_0$ we consider  
\begin{equation}\label{eqM}
\begin{cases}
(l-\lambda)\phi_1 +  m \phi_2= f_1,&\\
   -(\varkappa+\lambda) \phi_2 + q \phi_3= f_2, &\\
  N \phi_3 - \lambda \phi_3= f_3, &  
\end{cases}
\end{equation}
for each $f=(f_1, f_2, f_3)\in X= C[0,1]\times C[0,1]\times L^2(0,1)$.
Since the last equation in   \eqref{eqM}  is independent of $\phi_1$ and $\phi_2$,  for all $f_3\in L^2(0,1)$ and all $\lambda \notin \sigma(N)$ we have a unique solution $\phi_3 \in H^2(0,1)$.   Notice that due to the Sobolev embedding theorem $\phi_3 \in C[0,1]$.
 
If $\lambda\neq -\varkappa$,  $\lambda \notin \sigma(N)$, and  $l\notin R(l)$, then  we have a unique  solutions   of \eqref{eqM} in $C[0,1]\times C[0,1] \times H^2(0,1)$,  given by  
\begin{equation}\label{SolutionM}
\begin{aligned}
 \phi_1(x,\lambda)&=\frac {f_1(x)}{l(x)-\lambda}+\frac {m(x)(f_2(x)- q(x)\phi_3(x))}{(\varkappa+\lambda)(l(x)-\lambda)},\\
 \quad\phi_2(x,\lambda)&= -\frac {f_2(x)- q(x)\phi_3(x)}{\varkappa+\lambda}, \\
 \phi_3(x, \lambda) &= (N-\lambda)^{-1} f_3.
 \end{aligned}
\end{equation}
Here     $R(l)$ denotes the range of the continuous function  $l$  on $[0,1]$.  Thus  $\lambda$ is a point of  the resolvent set of $\cA_0$. 

If  either  $\lambda=-\varkappa$, $\lambda \in \sigma( N )$,  or  $\lambda \in R(l)$,  then the operator
$N$ does not have a solution for some $f_3\in L^2(0,1)$  or $\phi_1$ and $\phi_2$, defined in \eqref{SolutionM}, are not continuous for some $f_1, f_2 \in C[0,1]$. 


Hence, the spectrum of $\cA_0$ coincides with the set
  $$
  \{-\varkappa\}\cup\{\lambda_j\}_{j=1}^\infty\cup  R(l),
  $$
where $\lambda_j$ are eigenvalues of the Sturm-Liouville operator
$N$ subject to the Neumann boundary conditions.

Since $\alpha c_\ast^2+d_g>0$ in $[0,1]$, all eigenvalues $\lambda_j$  of 
$N=\dfrac 1 \gamma \frac{d^2 }{dx^2}-\alpha c_\ast^2(x)-d_g$ subject to the Neumann boundary conditions are strictly negative, and hence
$\sigma(N)\subset \{\lambda\in \mathbb{C}\colon \Rel \lambda \leq \lambda_1\}$, where $\lambda_1<0$. 

Moreover,
the first equation in \eqref{eq_stead_state} yields
\begin{equation}\label{estimateL}
    l(x)=\frac {\theta b_\ast^2(x)}{(b_\ast(x)+ c_\ast(x))^2} - d_c< \frac {\theta b_\ast(x)}{b_\ast(x)+ c_\ast(x)} - d_c=-\frac{\mu}{c_*(x)}<0
\end{equation}
for all $x\in [0,1]$. Hence  $h^*$ is a negative number.
\end{proof}

Next, we will consider the operator $\cA$ as the perturbation of $\cA_0$  by the operator $\cA_1$.
Let $T$ and $S$ be operators with the same domain space $\cH$ such that $\mathcal{D}(T) \subset \mathcal{D}(S)$ and
\begin{equation}\label{Tbounded}
||Su|| \leq a ||u|| + b ||Tu||,   \qquad u\in \mathcal{D}(T) ,
\end{equation}
where $a$, $b$ are nonnegative constants. Then, we say that $S$ is relatively bounded with respect to $T$ or simply $T$-bounded.
Assume that $T$ is closed and  there exists a bounded operator $T^{-1}$,  and $S$ is $T$-bounded with constants $a$, $b$ satisfying the inequality
$$
  a||T^{-1}|| + b <1.
$$
Then, $T+S$ is a closed and bounded invertible operator \cite[\textit{Th.1.16}]{Kato}.

\begin{theorem}\label{the_spectr_A}
Under the assumptions of  Theorem~\ref{the_spectr_A0}
the spectrum of $\cA$ lies in the set $\{\lambda\in \mathbb{C}\colon \Rel \lambda \leq h\}$ with $h<0$ provided
\begin{equation}\label{estH2}
\begin{aligned}
  &  \max\left\{ \nu,  4\alpha \max\limits_{x\in [0,1]}  c_\ast(x) g_\ast(x),    \max\limits_{x\in [0,1]} \frac{k \beta 
c^{k-1}_\ast(x)}{(1+c_\ast^k(x))^2} \right\}<\left[ C_1 |h^\ast|^{-1/2} + C_2 |h^*|^{-1} \right]^{-1}, 
\end{aligned}
\end{equation}
where  $h^\ast$ as in Theorem~\ref{the_spectr_A0} and 
$$
\begin{aligned}
C_1 =&\frac{ \alpha} D \max\limits_{x\in [0,1]} (c_\ast(x))^2 + \frac{\alpha \theta}{D^2} \max\limits_{x\in [0,1]}\dfrac{  (c_\ast(x))^4 }{(b_\ast (x)+ c_\ast(x))^2}, \\
C_2 =& \max\left \{1+ \frac\theta D \max\limits_{x\in [0,1]} \dfrac{ ( c_\ast(x))^2 }{(b_\ast (x)+ c_\ast(x))^2},  \, 1+ C_1 \right \}, \\
D =& \min\left \{ \min\limits_{x\in [0,1]} \left|d_b + \nu  + \frac{\theta (b_\ast(x))^2}{(b_\ast (x)+ c_\ast(x))^2}- d_c\right|, \, \min\limits_{x\in [0,1]}\left |d_b + \nu - d_g - \alpha (c_\ast(x))^2\right|, \right. \\
&\left.  \qquad
\min\limits_{x\in [0,1]} \left|\frac{\theta (b_\ast(x))^2}{(b_\ast (x)+ c_\ast(x))^2} + d_g + \alpha (c_\ast(x))^2- d_c\right| \right\}.
\end{aligned}
$$
\end{theorem}
\begin{proof} Let $h\in(h^*,0)$. Choose  $\lambda_0\in \{\lambda\in \mathbb{C}\colon \Rel \lambda > h\}$.
Due to Theorem~\ref{the_spectr_A0} the operator $\cA_0-\lambda_0E$ is bounded invertible.
On the other hand,  for $\cA_1$ we have the following estimate 
$$
\|\mathcal{A}_1 u\|_X \leq \|n \phi_1\|_{C[0,1]}+ \|r \phi_1\|_{L^2(0,1)} +
\|\nu \phi_2\|_{L^2(0,1)}  \leq a \|u\|_{X},
$$
where
$$
a=\max\left\{ \max_{x\in [0,1]}\big[|r(x)|+n(x)\big],\, \nu \right\}.
$$
Thus $\mathcal{A}_1$ is $(\cA_0-\lambda_0E)$-bounded with $b=0$ in the corresponding inequality \eqref{Tbounded}.
Consequently if $a\|(\cA_0-\lambda_0E)^{-1}\|_X<1$, then
the operator $\cA-\lambda_0E=\cA_0+\cA_1-\lambda_0E$ is bounded invertible. Using the definition of the operator $\cA_0$, estimates for $(N-\lambda_0)^{-1}$ and equations \eqref{SolutionM} we obtain 
\begin{equation}\label{estim_resolv}
\begin{aligned}
\|\phi_3\|_{L^2(0,1)} \leq& \;  \frac{\|f_3\|_{L^2(0,1)}}{\textrm{dist}(\lambda_0, \sigma(\mathcal{N}_0))},\\
 \|\nabla \phi_3\|_{L^2(0,1)} \leq & \; \frac{\|f_3\|_{L^2(0,1)}}{\textrm{dist}(\lambda_0, \sigma(\mathcal{N}_0))^{1/2}},\\
\|\phi_2\|_{C[0,1]}  = & \; \frac{\|f_2\|_{C[0,1]}}{\varkappa+\lambda_0} \\
&+  \max_{(0,1)} |q(x)| \frac{\|f_3\|_{L^2(0,1)}}{(\varkappa+\lambda_0) }
\left( \frac 1{\textrm{dist}(\lambda_0, \sigma(\mathcal{N}_0)) } +  \frac 1{\textrm{dist}(\lambda_0, \sigma(\mathcal{N}_0))^{1/2} } \right), \\
\| \phi_1\|_{C[0,1]} \leq & \;  \frac {\|f_1\|_{C[0,1]}}{\min\limits_{x\in[0,1]}|l(x)-\lambda_0|} +\frac {\max_{x\in[0,1]}|m(x)| \, \|f_2(x)\|_{C[0,1]}}{(\varkappa+\lambda_0) \min\limits_{x\in[0,1]}|l(x)-\lambda_0| } \\
 & + \frac {\max_{x\in [0,1]}|m(x) q(x)| }{\min\limits_{x\in[0,1]}|l(x)-\lambda_0| }\frac{\|f_3\|_{L^2(0,1)}}{(\varkappa+\lambda_0) }
\left[ \frac 1{\textrm{dist}(\lambda_0, \sigma(\mathcal{N}_0)) } +  \frac 1{\textrm{dist}(\lambda_0, \sigma(\mathcal{N}_0))^{\frac 12} } \right].  \end{aligned}
\end{equation}
Hence we obtain that 
\begin{equation*}
    \|(\cA_0-\lambda_0E)^{-1}\|_X \leq 
  \frac {\tilde C_1}{\textrm{dist}\,(\lambda_0, \sigma(\mathcal{A}_0))^{1/2}}+ \frac {\tilde C_2}{\textrm{dist}\,(\lambda_0, \sigma(\mathcal{A}_0))},  
\end{equation*}
where $$
\begin{aligned}
&\tilde C_1 =  \max\limits_{x\in [0,1]}|q(x)|\tilde D^{-1}+  \max\limits_{x\in [0,1]}|m(x)\,q(x)| \tilde D^{-2}, \\
&\tilde C_2 = \max\left \{1+ \tilde D^{-1}\max\limits_{x\in [0,1]}|m(x)|,\,  1+ \tilde C_1 \right \}, \\
&\text{ with } \tilde D = \min\{ \min\limits_{x\in [0,1]} |\varkappa + l(x)|, \, \min\limits_{x\in [0,1]} |\varkappa - d_d - q(x)|, \; 
\min\limits_{x\in [0,1]} |l(x) + d_d + q(x)| \}.
\end{aligned}
$$

If \eqref{estH2} holds, then $a< [\tilde C_1/|h^\ast|^{1/2}+\tilde C_2/|h^*| ]^{-1}$ since $|r(x)|+n(x)=r+n=\dfrac{k\beta c_\ast^{k-1}}{(1+c_\ast^k)^2}$ for $r(x)\geq 0$
 and $|r(x)|+n(x)\leq 2 n(x)$ for $r(x)<0$, $x\in[0,1]$,
where $n=2\alpha c_*g_*$. Therefore $a<\big [\tilde C_1/|h^\ast-h|^{1/2}+\tilde C_2/|h^*-h| \big]^{-1}$  for some $h\in(h^*,0)$ and $a\|(\cA_0-\lambda_0E)^{-1}\|<1$. We showed that the set
$\{\lambda\in \mathbb{C}\colon \Rel \lambda > h\}$ belongs to the resolvent set of $\cA$, which completes the proof.
\end{proof}


\subsection{Linearized stability conditions}\label{Numeric}

Theorem \ref{the_spectr_A} implies that under the assumptions \eqref{estH2}   the spectrum of linearized operator
$\mathcal A$ lies in $\{\lambda \in \mathbb C: {\Rel } \, \lambda < h\}$ for some  $h<0$.
Then, from Proposition \ref{Stabil2} follows the linearized stability of  a stationary  solution $(c_*,b_*,g_*)$  in
$X^s\subseteq C[0,1]\oplus C[0,1]\oplus H^{2s}(0,1)$, $s\in [1/2,1)$.
\begin{corollary}
A spatially nonconstant stationary solution $(c_*,b_*,g_*)$ of system (\ref{main}) is li\-nearly stable in $X^s$  if the following conditions are fulfilled
\begin{eqnarray} \label{stab_ass}
\begin{aligned}
 k\beta  \frac{c_\ast^{k-1}(x)}{(1+c_\ast^k(x))^2} &< [C_1|h^\ast|^{-1/2} + C_2|h^\ast|^{-1}]^{-1}, \quad \\
 \nu &< [C_1|h^\ast|^{-1/2} +C_2|h^\ast|^{-1}]^{-1}, \\
 4\alpha\,  c_\ast(x) \, g_\ast(x) & < [C_1|h^\ast|^{-1/2} + C_2|h^\ast|^{-1}]^{-1}, 
\end{aligned}
\end{eqnarray}
for all $x\in[0,1]$, where $h^\ast$ is defined  in Theorem \ref{the_spectr_A0} and the constants $C_1$ and $C_2$ are defined in Theorem~\ref{the_spectr_A}.
\end{corollary}

\begin{remark} It is enough to verify  \eqref{stab_ass} in a  local minimum  $g_0$ of the potential $U$ from Theorem \ref{exist_stat} and for the  sets of   parameters such that the local minimum exists.
Then,  Corollary \ref{small_amp} and the strict inequality in \eqref{stab_ass} provide stability of
  a stationary solution with the small amplitude.
\end{remark}

\section{Conclusions}

In this paper we performed linearized stability analysis of the nonhomogeneous stationary solutions of  a single reaction-diffusion equation coupled to ordinary differential equations. We focused on a specific model of pattern formation in a system of cells with the proliferation regulated by a diffusive growth factor. 
We extended results on the stationary problem obtained in \cite{Marciniak} to the larger space of parameters and showed existence of inifinitely many stationary solutions. Numerical calculations indicate that
the relationship between $d_c$ and $\theta$ is signi\-fi\-cant  for the existence of nonconstant stationary solutions. 
It is observed that for $d_c \neq \theta$ the system~\eqref{main} admits
the nonconstant stationary solutions when the diffe\-rence $|d_c-\theta|$ is small enough.
For $d_c=\theta$ the value of the parameter $\beta$ plays an important role in   existence of the stationary solutions.

Performing spectral analysis using sectorial operators and the perturbation theo\-ry we obtained conditions for the linearized stability of the
 spatial patterns.  Formulation of the conditions in terms of the model parameters independent of the values of 
stationary solutions 
is difficult to obtain due to the complex structure  of the stationary problem. For some parameter values, due
 to existence of multiple solutions of the ODEs subsystem, the model exhibits also discontinuous 
stationary solutions and it is difficult to decide which patterns we see in numerical simulations.  
The mechanism of pattern selection is a topic of further analytical and numerical studies.


 


\begin{thebibliography}{99}

\bibitem{ArnoldODE} 
V. ~I.~Arnold, \emph{Ordinary Differential  Equations},
  MIT  Press,  Cambridge, 1978.

\bibitem{Chueh}
 K.~I.~ Chueh,  C.~Conley  and J.~Smoller,
   {Positively invariant regions for systems of nonlinear diffusion equations,}
  \emph{Ind.~Univ.~Math.~J.}, \textbf{26} (1977), 373--392.

\bibitem{Doelman1}  A.~Doelman, R.~A.~Gardner  and T.~J.~ Kaper,
{Stability analysis of singular patterns in the 1-D Gray-Scott model:
 A matched asymptotics approach,}
\emph{Phys. D}, \textbf{122} (1998), 1--36.

\bibitem{Doelman2}  A.~Doelman, R.~A.~Gardner and T.~J.~Kaper,
{Large stable pulse solutions in reaction-diffusion equations},
\emph{Indiana Univ.~Math.~J.}, \textbf{50} (2001), 443--507.

\bibitem{Henry} 
 D.~Henry,
\emph{Geomertic Theory of Semilinear Parabolic Equations},
 Springer-Verlag, 1981.

\bibitem{Kato} T. ~Kato,
\emph{Perturbation Theory for Linear Operators},
 Springer-Verlag, New York Inc, 1966.

\bibitem{our_M3AS} 
 A.~Marciniak-Czochra and M.~Kimmel,
 {Modelling
of early lung cancer progression: Influence of growth factor production
and cooperation between partially transformed cells,}
\emph{Math. Mod. Meth. Appl. Sci.}, \textbf{17}  (2007), 1693--1719.

\bibitem{our_CMMM}   A.~Marciniak-Czochra and M.~Kimmel,
 {Dynamics of
growth and signalling along linear and surface structures in very
early tumours,}
\emph{Comp. Math. Meth. Med.}, \textbf{7} (2006), 189--213.

\bibitem{Marciniak}  A.~Marciniak-Czochra and M.~Kimmel,
{Reaction-diffusion model of early carcinogenesis: The effects of influx of mutated cells},
\emph{Math. Model. Nat.Phenom.},
\textbf{3} (2008), 90--114.

\bibitem{Murray}  J. D.~ Murray,
\emph{Mathematical Biology},  Springer-Verlag, 2003.

\bibitem{Maini} P. K.~Maini,
 In \emph{On Growth and Form. Spatio-Temporal Pattern Formation in Biology}, John Wiley \& Sons, 1999.

\bibitem{Rothe}   F.~Rothe,  \emph{Global Solutions of Reaction-Diffusion Systems},  Springer-Verlag, Berlin, 1994.

\bibitem{Smoller}   J.~Smoller,
 \emph{Shock-Waves and Reaction-Diffusion Equations},
 Springer-Verlag, New York Heidelberg Berlin, 1994.

\bibitem{Turing}  A.~M.~Turing,   {The chemical basis of morphogenesis,}
\emph {Phil. Trans. Roy. Soc. B}, \textbf{237} (1952), 37--72.

\bibitem{Wei}  J.~Wei,
{On the interior spike layer solutions for some singular perturbation problems},
\emph{Proc.~Royal Soc.~Edinb.}, ~\textbf{128A} (1998), 849--874.

\end{thebibliography}
\end{document}